\newcommand*{\centerfloat}{%
  \parindent \z@
  \leftskip \z@ \@plus 1fil \@minus \textwidth
  \rightskip\leftskip
  \parfillskip \z@skip}
\newtheorem{thm}{Theorem}
\newcommand{\els}{\mathbb{E}}
\newcommand{\reach}[2]{\mbox{$\mathcal{R}_{#1}(#2)$}}
\newcommand{\mret}{\textsc{mret}}
\newcommand{\tsat}{\textsc{3-sat}}
\newcommand{\nptime}{\mbox{NP}}
\newcommand{\true}{\mbox{\textsc{True}}}
\newcommand{\false}{\mbox{\textsc{False}}}
\begin{document}

\title{A Note on the Complexity of Maximizing Temporal Reachability via Edge Temporalisation of Directed Graphs}

\author[1]{Alkida Balliu}

\author[2]{Filippo Brunelli\thanks{This work was partly supported by the French National Research Agency (ANR) through through the projects  Multimod (ANR-17-CE22-0016) and Tempogral (ANR-22-CE48-0001).}}

\author[1]{Pierluigi Crescenzi}

\author[1]{Dennis Olivetti}

\newcommand\CoAuthorMark{\footnotemark[1]} 
\author[2]{Laurent Viennot\protect\CoAuthorMark}

\affil[1]{GSSI, I-67100 L'Aquila, Italy}
\affil[2]{Inria, Irif, Université de Paris, F-75013 Paris, France}

\maketitle

\begin{abstract}
A temporal graph is a graph in which edges are assigned a time label. Two nodes $u$ and $v$ of a temporal graph are connected one to the other if there exists a path from $u$ to $v$ with increasing edge time labels. We consider the problem of assigning time labels to the edges of a digraph in order to maximize the total reachability of the resulting temporal graph (that is, the number of pairs of nodes which are connected one to the other). In particular, we prove that this problem is NP-hard. We then conjecture that the problem is approximable within a constant approximation ratio. This conjecture is a consequence of the following graph theoretic conjecture: any strongly connected directed graph with $n$ nodes admits an out-arborescence and an in-arborescence that are edge-disjoint, have the same root, and each spans $\Omega(n)$ nodes.
\end{abstract}

\noindent\textit{Keywords:}
temporal graph; temporal path; time assignment; temporal reachability.

\section{Introduction}

Temporal graphs have received increasing attention over the last two decades~\cite{Holme2012,Holme2013,Masuda2016,Michail2016} and have been defined in several different ways~\cite{Berman1996,Harary1997,Kempe2002Connectivity,BhadraF03,Cheng2003,Casteigts2012,Latapy2018} (see~\cite{Brunelli2021} for a classification of temporal graphs). Here, we say that a \textit{temporal graph} $G=(V,\els)$ is a list $\els$ of \textit{temporal edges} $(u,v,t)$, where $u,v\in V$ are two \textit{nodes} of the graph (called, respectively, \textit{tail} and \textit{head} of the temporal edge) and $t$ is the \textit{appearing time} of the temporal edge. For each temporal edge $(u,v,t)$, we can traverse the edge starting from $u$ at time $t$ and arrive in $v$ at time $t+1$, which is the \textit{arrival time} of the temporal edge.

We study a network optimisation problem related to the notion of reachability in temporal graphs. Given a temporal graph $G$, a (temporal) \textit{path} from a node $u$ to a node $v$ is a sequence $e_{1},e_{2},\ldots,e_{k}$ of temporal edges such that the tail of $e_{1}$ is $u$, the head of $e_{k}$ is $v$, and, for any $i$ with $1<i\leq k$, the tail of $e_{i}$ is equal to the head of $e_{i-1}$ and the appearing time of $e_{i}$ is greater than the appearing time of $e_{i-1}$. The \textit{temporal reachability} of $G$ is the number of pairs of nodes $u$ and $v$ such that $v$ is \textit{temporally reachable} from $u$, that is, there exists a temporal path from $u$ to $v$. The \textsc{Maximum Reachability Edge Temporalisation} (\mret) problem consists of, given a directed graph (in short, digraph) $D=(V,E)$, find an \textit{edge temporalisation} $\tau: E\rightarrow\mathbf{N}$ such that the temporal reachability of the resulting temporal graph is maximized. For example, let us consider the digraph shown in the left part of Figure~\ref{fig:temporalisation}. In the right part of the figure, we show an edge temporalisation of a digraph $D$ with four nodes, such that the temporal reachability of the resulting temporal graph is equal to $16$, which is clearly the maximum possible temporal reachability.

\begin{figure}[t]
\centering{\includegraphics[scale=1.0]{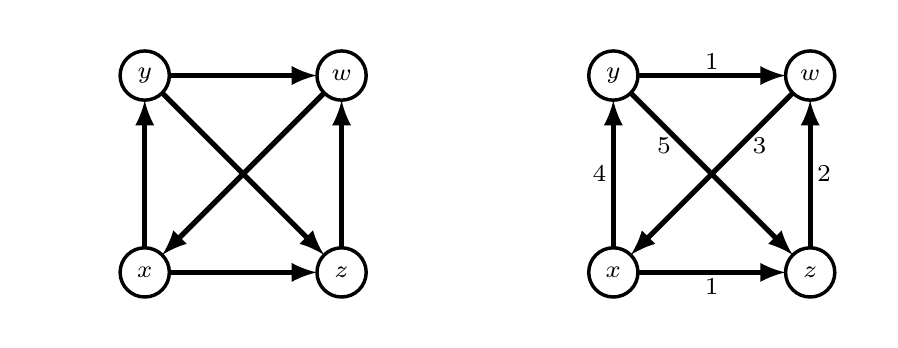}}
\caption{A digraph $D$ (left), and an edge temporalisation of $D$ (right). The temporal reachability of the resulting temporal graph is $16$, since each node is temporally reachable from any other node.}
\label{fig:temporalisation}
\end{figure}

The \mret\ problem restricted to undirected graphs has been studied in~\cite{Goebel1991}, where it is shown that the problem of deciding whether the resulting temporal graph is temporally connected (that is, for any two nodes $u$ and $v$, $v$ is temporally reachable from $u$) is \nptime-complete (clearly, this implies that the \mret\ problem restricted to undirected graphs is \nptime-hard). It is also easy to see that the \mret\ problem restricted to undirected connected graphs can be approximated within a constant approximation ratio, since this simply requires to look for a ``centroid'' in a spanning tree as a temporalisation where half of the nodes can reach the other half can then easily be computed.
Note, however, that temporalising a symmetric digraph is not equivalent to temporalising an undirected graph as different times can be assigned to an edge $(u,v)$ and the symmetric edge $(v,u)$.

In the following, we prove that the \mret\ problem is \nptime-hard, even when restricted to strongly connected digraphs. We will conclude by conjecturing that the problem is approximable within a constant approximation ratio, and suggesting a graph theory conjecture which could be use to proved the conjecture concerning the approximability of the \mret\ problem.

Problems similar to the one considered in this paper have already been analysed~\cite{Kempe2002Connectivity,MertziosMCS2013,Coro2019,Enright2019,Mertzios2019,Molter2021,Mertzios2021,Brunelli2022}. For instance, in~\cite{Mertzios2019} the authors propose two cost minimization parameters for temporal network design (that is, the maximum number of appearing times of an edge and the total number of appearing times of all edges), and they study the problem of optimizing these parameters subject to some connectivity constraint.

\section{Hardness result}

The next result shows that there is no polynomial-time algorithm solving the \mret{} problem, unless P is equal to NP. In the following, we will refer to edge temporalisations as schedules, that is, as orderings of the edges of the digraph. Indeed, one can easily transform an edge temporalisation $\tau$ into an edge temporalisation $\tau'$, where all time labels are pairwise distinct and where the total reachability according to $\tau$ is preserved. We then note that the total reachability according to $\tau'$ depends only on the ordering of the edges according to their time label. Given a digraph $D=(V,E)$ and a schedule $S$, a node $v$ is said to be $S$-reachable from a node $u$ if it is temporally reachable in the temporal graph $G$ induced by $D$ and the temporalisation $\tau_S$ that assigns appearing time $i$ to the $i$th edge of $S$ for $i\in [|E|]$. The set of nodes $S$-reachable from a node $u$ is denoted as $\reach{G}{u}$. The \textit{$S$-reachability} of $D$ is defined as the temporal reachability of the temporal graph induced by $\tau_S$.

\begin{thm}\label{thm:hardnessedgetemporalisation}
The \mret{} problem is NP-hard, even if the digraph $D$ is strongly connected.
\end{thm}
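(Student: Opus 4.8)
The plan is to reduce from \tsat{} (3-SAT), since the excerpt defines the macro \tsat{} and this is the natural source of hardness. Given a 3-CNF formula $\varphi$ with variables $x_1,\dots,x_n$ and clauses $C_1,\dots,C_m$, I would build a strongly connected digraph $D_\varphi$ together with a "target reachability'' threshold $R$ such that $D_\varphi$ admits a schedule achieving $S$-reachability at least $R$ if and only if $\varphi$ is satisfiable. The core idea is that a schedule, being an ordering of the edges, can encode a truth assignment: for each variable $x_i$ one places a small gadget with two parallel ``branches'' (a true-branch and a false-branch), and which branch gets the early time labels versus the late ones corresponds to setting $x_i$ to true or false. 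One then routes, for each clause $C_j$, a path through the literal-branches it contains, arranged so that the clause is ``satisfied'' (a large block of nodes becomes mutually reachable through it) exactly when at least one of its literal-branches is scheduled in the consistent direction.

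Concretely, I would first describe the variable gadget and argue a normalisation lemma: without loss of generality an optimal schedule orders each variable gadget in one of the two ``clean'' ways (true or false), because any interleaved ordering can be uncrossed without decreasing reachability — this is where a local exchange argument is needed. Second, I would add a large ``reservoir'' of $N$ extra nodes (with $N$ polynomial in $n+m$, say $N = \Theta((n+m)^2)$) attached via the clause gadgets, so that the reachability contributed by satisfied clauses dominates all the $O((n+m)^2)$ ``noise'' coming from the gadget nodes themselves; then the threshold $R$ can be set so that reaching it forces all $m$ clauses to be satisfied simultaneously. Third, to keep $D_\varphi$ strongly connected I would add cheap ``return'' edges (a directed cycle through all nodes, or a hub node with edges in and out) scheduled either very early or very late so that they do not create shortcuts that would let an unsatisfied clause be bypassed; one must check these return edges contribute only lower-order reachability. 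Finally I would prove the two directions: a satisfying assignment yields a schedule meeting $R$ by scheduling gadgets according to the assignment and verifying the count; conversely a schedule meeting $R$ must, by the normalisation lemma and the dominance of the reservoir term, induce a consistent assignment satisfying every clause.

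The main obstacle I expect is the interaction between temporal monotonicity and strong connectivity: once every pair of nodes has a directed path, I must ensure that the \emph{time ordering} still prevents an unsatisfied clause gadget from being traversed via some roundabout route using the strong-connectivity edges or another variable's gadget. Getting the gadget ``one-way valve'' behaviour — early labels on one side, late labels on the other, with no admissible increasing-time detour — is the delicate part, and it typically forces the reduction to use long subdivided paths so that any unintended route would need strictly increasing labels along a path longer than the total number of available time slots, which is impossible. A secondary technical point is the bookkeeping: pinning down $N$, the threshold $R$, and verifying the inequality ``(satisfied-clause reachability) $>$ (all possible gadget reachability when some clause is unsatisfied)'' precisely, which is routine but must be done carefully. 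I would also remark, as the excerpt's framing invites, that the same construction can be tightened or left as is depending on whether one wants plain NP-hardness (which is all Theorem~\ref{thm:hardnessedgetemporalisation} claims) versus an inapproximability bound.
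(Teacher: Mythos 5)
Your high-level plan (reduce from \tsat{}, encode the assignment in the edge ordering of per-variable gadgets, amplify each clause's contribution with a polynomial-size reservoir so a counting threshold separates satisfiable from unsatisfiable, and add return edges for strong connectivity) is exactly the paper's strategy. However, the two points you yourself flag as delicate are precisely where your sketch has genuine gaps, and the mechanisms you tentatively propose for them would not work. First, the variable gadget: with two \emph{parallel} (edge-disjoint) true/false branches, nothing stops a schedule from ordering \emph{both} branches consistently, so both literals of $x_i$ would be ``activated'' and an unsatisfiable formula could still meet the threshold; no normalisation/exchange lemma can fix this, because activating both branches only increases reachability. The paper's gadget avoids this by making the two branches share edges: it is a single directed $4$-cycle $t_i^1\to f_i^2\to f_i^1\to t_i^2\to t_i^1$, where the temporal path witnessing $t_i^1\leadsto t_i^2$ requires $(t_i^1,f_i^2)$ to be scheduled before $(f_i^1,t_i^2)$ and the path witnessing $f_i^1\leadsto f_i^2$ requires the opposite order, so at most one of the two pairs can ever be activated, for \emph{every} schedule --- no normalisation is needed, and the assignment is read off directly from which of these two edges comes first.

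Second, the ``one-way valve'' preventing roundabout routes. Your proposed mechanism --- long subdivided paths so that a detour would need more increasing labels than there are time slots --- is vacuous here: with $|E|$ distinct labels, any simple path (length at most $|E|$) can in principle carry increasing labels, so path length alone never rules out a detour. The paper instead installs a block gadget with a single critical edge pair $(u_1,u_2)$ and $(u_3,u_4)$ and $M$ pendant nodes $b_1,\dots,b_M$ with $M>(|H|+5)^2$: every walk that bypasses a clause gadget must traverse $(u_3,u_4)$ before $(u_1,u_2)$, but any schedule putting $(u_3,u_4)$ first loses the $\Theta(M^2)$ pairs $b_i\leadsto b_j$ and falls below the threshold outright. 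This quadratic-penalty valve, together with the $K^2$ loss incurred by each unsatisfied clause (the $d_{j^*}^l\not\leadsto e_{j^*}^{l'}$ pairs, with $K\ge 91nm$), is what makes the bookkeeping close; without a concrete substitute for it your reduction's soundness direction does not go through.
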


\begin{proof}
We reduce \tsat{} to \mret{} as follows. Let us consider a \tsat{} formula $\Phi$, with $n$ variables $x_1, \dots, x_n$ and $m$ clauses $c_1, \dots, c_m$. Without loss of generality we will assume that each variable appears positive in at least one clause and negative in at least one clause. We first define the \textit{unweighted} digraph $D=(V,E)$ as the union of the following gadgets (see Figure~\ref{fig:mret-reduction}).

\begin{description}
\item[\textbf{Variable gadgets}] For each variable $x_i$ of $\Phi$, $V$ contains the nodes $t_i^1$, $t_i^2$, $f_i^1$,and $f_i^2$ and $E$ contains the edges $(t_i^1,f_i^2)$, $(f_i^2,f_i^1)$, $(f_i^1,t_i^2)$, and $(t_i^2,t_i^1)$.

\item[\textbf{Clause gadgets}] For each clause $c_j$, $V$ contains the nodes $c_j^1$ and $c_j^2$. If the literal $x_i$ appears in $c_j$, $E$ contains the edges $(c_j^1,t_i^1)$ and $(t_i^2,c_j^2)$, while if the literal $\neg x_i$ appears in $c_j$, $E$ contains the edges $(c_j^1,f_i^1)$ and $(f_i^2,c_j^2)$. Moreover, for each two clauses $c_h$ and $c_j$ with $h \neq j$, $E$ contains the edge $(c_j^1,c_h^2)$ (see the dashed edges in the figure). Finally, for each clause $c_j$, $V$ also contains the nodes $d_j^i$ and $e_j^i$, for $i\in[K]$ (the value of $K$ will be specified later in the proof), and $E$ contains the edges $(d_j^i,c_j^1)$ and $(c_j^2,e_j^i)$, for $i\in[K]$.

\begin{figure}[t]
    \centerfloat{\includegraphics[scale=1]{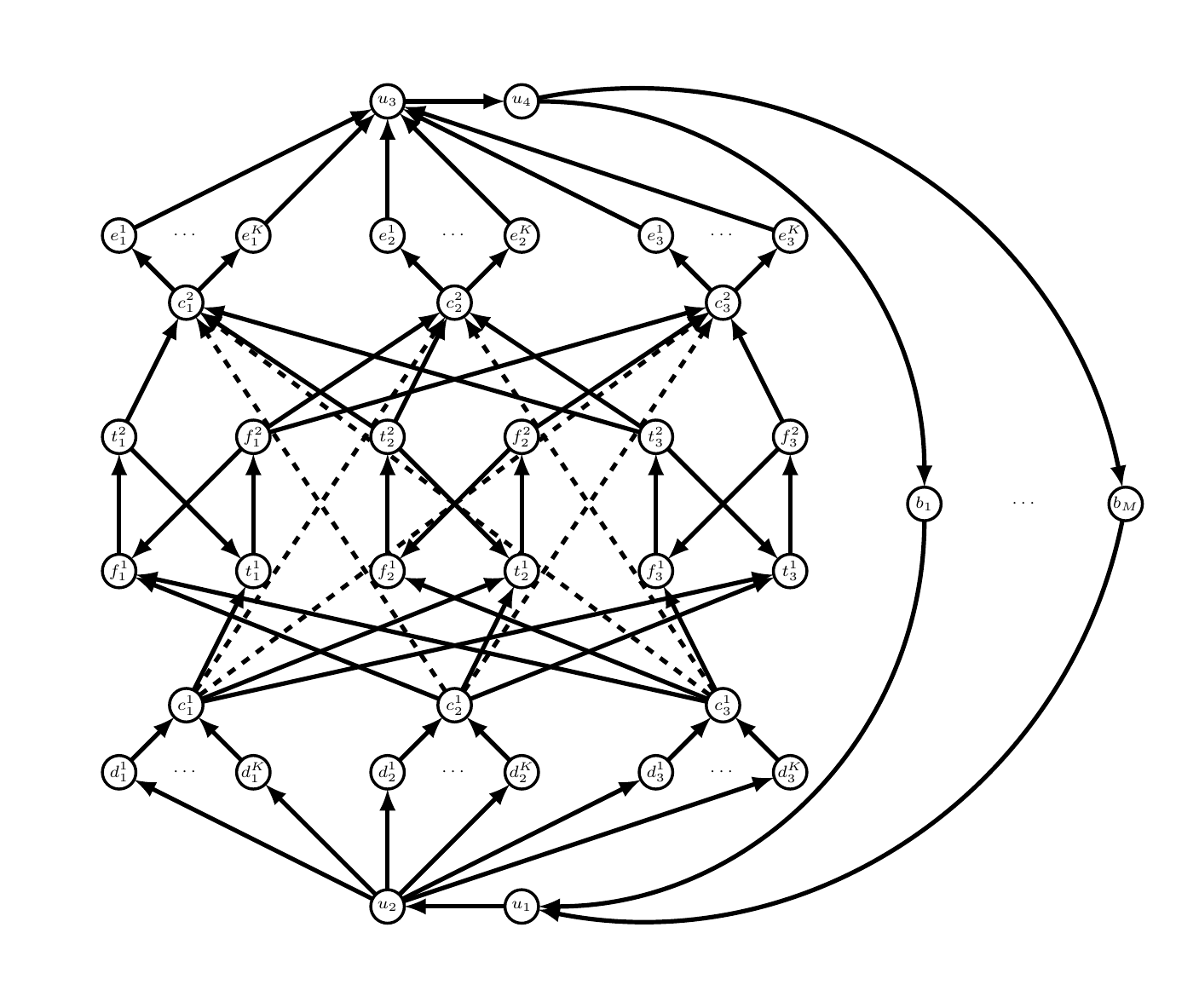}}
    \caption{An example of the reduction of \tsat{} to \mret{}. The \tsat{} formula is $(x_1\vee x_2\vee x_3)\wedge(\neg x_1\vee x_2\vee x_3)\wedge(\neg x_1\vee\neg x_2\vee\neg x_3)$.}
    \label{fig:mret-reduction}
\end{figure}

\item[\textbf{Block gadget}] $V$ contains the nodes $u_1$, $u_2$, $u_3$, and $u_4$, and the nodes $b_i$, for $i\in[M]$ (the value of $M$ will be specified later in the proof). $E$ contains the edges $\{(b_i,u_1) : i\in[M]\}$, $(u_1,u_2)$, $\{(u_2, d_j^l) : j\in[m], l\in[K]\}$, $\{(e_j^l,u_3) : j\in[m], l\in[K]\}$, $(u_3,u_4)$, and $\{(u_4,b_i) : i\in[M]\}$.
\end{description}

Note that $D$ is strongly connected. Indeed, let us consider the cycles
\begin{equation*}
    C_{i,j,l,p} = \langle u_1, u_2, d_j^l, c_j^1, t_i^1, f_i^2, f_i^1, t_i^2, c_j^2, e_j^l, u_3, u_4, b_p, u_1 \rangle,
\end{equation*}
where $j\in[m]$, $i$ is such that $x_i$ is a literal of the clause $c_j$, $l\in[K]$, and $p\in[M]$, and the cycles 
\begin{equation*}
    C_{i,j,l,p} = \langle u_1, u_2, d_j^l, c_j^1, f_i^1, t_i^2,t_i^1, f_i^2, c_j^2, e_j^l, u_3, u_4, b_p, u_1 \rangle,
\end{equation*}
where $j\in[m]$, $i$ is such that $\neg x_i$ is a literal of the clause $c_j$, $l\in[K]$, and $p\in[M]$. The union of these cycles contains each node in $V$, and each of these cycles contains node $u_1$. This proves the strong connectivity of $D$.

In the following, $B$ denotes the set $\{b_i : i\in[M]\}$ and $H$ denotes the set of nodes which do not belong to the block gadget, that is, $H=V\setminus(\{u_1,u_2,u_3,u_4\}\cup B)$ (note that $|V|=M+|H|+4$ and that $|H|=2(K+1)m+4n$).

\medskip
\noindent\textbf{Activation of pairs of nodes in a variable gadget}
Consider a variable $x_i$ and the associated variable gadget and a schedule $S$ of the 4 edges associated to the variable gadget. We say that $S$ \textit{activates} the pair $(t_i^1,t_i^2)$ (respectively, $(f_i^1,f_i^2)$) if $t_i^2$ (respectively, $f_i^2$) is $S$-reachable from $t_i^1$ (respectively, $f_i^1$) within the gadget, that is $(t_i^1,f_i^2)$, $(f_i^2,f_i^1)$, and $(f_i^1,t_i^2)$ (respectively, $(f_i^1,t_i^2)$, $(t_i^2,t_i^1)$, $(t_i^1,f_i^2)$) are scheduled in that order. Note that no schedule can activate both $(t_i^1,t_i^2)$ and $(f_i^1,f_i^2)$ as the edge $(t_i^1, f_i^2)$ is scheduled either before or after the edge $(f_i^1, t_i^2)$.

\medskip
\noindent\textbf{Constructing a schedule from a satisfying assignment.} Suppose that there is an assignment $\alpha$ that satisfies $\Phi$, and let us consider the following schedule $S$. First we schedule the edges in $\{(b_i, u_1) : i\in[M]\}$ (in any arbitrary order), then the edge $(u_1, u_2)$, and then the edges in $\{(u_2, d_j^i) : j\in[m], i\in[K]\}$ (in any arbitrary order). Then we schedule the edges $\{(d_j^i,c_j^1) : j\in[m], i\in[K]\}$ (in any arbitrary order), and, then, the edges going out from the nodes $c_j^1$, for $j\in[m]$ (in any arbitrary order). Then, for each $i\in[n]$, if $\alpha(x_i)=\true$, we schedule the edges $(t_i^1,f_i^2)$, $(f_i^2,f_i^1)$, $(f_i^1,t_i^2)$, and $(t_i^2,t_i^1)$ in this order (thus activating $(t_i^1,t_i^2)$). Otherwise (that is, $\alpha(x_i)=\false$), we schedule the edges $(f_i^1,t_i^2)$, $(t_i^2,t_i^1)$, $(t_i^1,f_i^2)$, and $(f_i^2,f_i^1)$ in this order (thus activating $(f_i^1,f_i^2)$). Then we schedule all the edges entering the nodes $c_j^2$, for $j\in[m]$ (in any arbitrary order), and then all the edges going out from the nodes $c_j^2$, for $j\in[m]$ (in any arbitrary order). Finally, we schedule the edges in $\{(e_j^l,u_3) : j\in[m],l\in[K]\}$ (in any arbitrary order), then the edge $(u_3,u_4)$, and all the edges in $\{(u_4,b_i) : i\in[M]\}$ (in any arbitrary order). Let $G$ be the temporal graph induced by $D$ and the schedule $S$.

First observe that, for any clause $c_j$ with $j\in[m]$, there exists a literal that satisfies $c_j$ according to the assignment $\alpha$. Let $x_i$ (respectively, $\neg x_i$) be a literal satisfying $c_j$. Since $(t_i^1,t_i^2)$ (respectively, $(f_i^1,f_i^2)$) is activated, there exists a temporal path from $c_j^1$ to $c_j^2$ that goes through variable gadget corresponding to $x_i$. This means that $c_j^2\in\reach{G}{c_j^1}$ and that, for $l,l'\in[K]$, $e_j^{l'}\in\reach{G}{d_j^l}$. We now prove a lower bound $\mathbb{L}$ on the $S$-reachability by showing a lower bound on the number of nodes temporally reachable from each possible source.

\begin{itemize}
    \item For any $v\in V$ and for $i\in[M]$, $v\in\reach{G}{b_i}$. This adds $M(M+|H|+4)$ to $\mathbb{L}$.
    
    \item For $i\in[4]$ and for $j\in[M]$, $b_j\in\reach{G}{u_i}$. Moreover, $\{u_1,u_2,u_3,u_4\}\cup H\subseteq\reach{G}{u_1}$, $\{u_2,u_3,u_4\}\cup H\subseteq\reach{G}{u_2}$, $u_3,u_4\in\reach{G}{u_3}$, and $u_4\in\reach{G}{u_4}$. This adds $4M+2|H|+10$ to $\mathbb{L}$.
    
    \item For $j,h\in[m]$, $i,l\in[K]$, and $p\in[M]$, $c_h^2,e_h^l\in\reach{G}{d_j^i}$ (because of the above observation) and $b_p\in\reach{G}{d_j^i}$. This adds $Km(M+Km+m)$ to $\mathbb{L}$.
    
    \item For $j,h\in[m]$, $l\in[K]$, and $i\in[M]$, $c_h^2,e_h^l,b_i\in\reach{G}{c_j^1}$. This adds $m(M+Km+m)$ to $\mathbb{L}$.
    
    \item For $i\in[n]$, there exists $j\in[m]$ such that $c_j$ is satisfied by $\alpha(x_i)$. Hence, for $p\in[2]$, $l\in[K]$, and $h\in[M]$,  $e_j^l,b_h\in\reach{G}{t_i^p}$ and $e_j^l,b_h\in\reach{G}{f_i^p}$. This adds $4n(M+K)$ to $\mathbb{L}$.
    
    \item For $j\in[m]$, $l\in[K]$, and $h\in[M]$, $e_j^l,b_h\in\reach{G}{c_j^2}$. This adds $m(M+K)$ to $\mathbb{L}$.
    
    \item For $j\in[m]$, $l\in[K]$, and $h\in[M]$, $b_h\in\reach{G}{e_j^l}$. This adds $MKm$ to $\mathbb{L}$.
\end{itemize}
Thus, the $S$-reachability is at least
\begin{eqnarray*}
\mathbb{L} & = & M(M+|H|+4)+(4M+2|H|+10)+Km(M+Km+m)\\
  &   & +m(M+Km+m)+4n(M+K)+m(M+K)+MKm.
\end{eqnarray*}

\medskip
\noindent\textbf{Bounding reachability when $\Phi$ is not satisfiable.} Let us set $M$ equal to any value greater than $(|H|+5)^2$. We now prove that, if there exists no truth-assignment satisfying the formula $\Phi$, then no schedule $S$ can have $S$-reachability greater than or equal to $\mathbb{L}$. First notice that if $S$ assigns to the edge $(u_3,u_4)$ a starting time smaller than the starting time assigned to $(u_1,u_2)$, then the $S$-reachability is less than $\mathbb{L}$. This is because, in this case, for $i,j\in[M]$ with $i\neq j$, $b_j$ is not $S$-reachable from $b_i$. Hence, the $S$-reachability is bounded by $\mathbb{U}_1=M(|H|+4+1)+(|H|+4)(M+|H|+4)$: this would happen if, for each node $v\not\in B$, $\reach{G}{v}=V$. Since $\mathbb{L}>M^2$, $\mathbb{U}_1=M(|H|+4+1)+(|H|+4)(M+|H|+4)=2M(|H|+4)+(|H|+4)^2+M < M(|H|+5)^2$, and $M>(|H|+5)^2$, it holds that $\mathbb{L}>\mathbb{U}_1$. We can then focus on schedules that assign to the edge $(u_1,u_2)$ a starting time smaller than the starting time assigned to the edge $(u_3,u_4)$. Let $S$ be such a schedule and let $G$ be the temporal graph induced by $D$ and $S$. We now prove an upper bound $\mathbb{U}_2$ on the $S$-reachability by giving an upper bound on the nodes reachable from each possible source. Observe that, for any two nodes $u$ and $v$, $v$ might belong to $\reach{G}{u}$ only if in $D$ there exists a path from $u$ to $v$ that does not include the edge $(u_3,u_4)$ before the edge $(u_1,u_2)$.

\begin{itemize}
    \item For $i\in[M]$, $|\reach{G}{b_i}|\leq|V|$. This adds $M(M+|H|+4)$ to $\mathbb{U}_2$.
    
    \item For $i\in[2]$, $|\reach{G}{u_i}|\leq|V|$, while $|\reach{G}{u_3}|\leq M+3$ and $|\reach{G}{u_4}|\leq|V|=M+|H|+4$. 
    
    \item For $j\in[m]$ and $i\in[K]$, in the best case $\reach{G}{d_j^i}$ contains $d_j^i$, $c_j^1$, the 12 nodes corresponding to the three variables appearing in $c_j$, and the nodes in $\{c_h^2 : h\in[m]\} \cup \{e_h^l : h\in[m], l\in[K]\} \cup \{u_1,u_3,u_4\} \cup B$, yielding $|\reach{G}{d_j^i}|\le M+Km+m+17$. However, we can show that there exists an index $j^*$ such that, for $l,l'\in[K]$, $e_{j^*}^{l'}\not\in\reach{G}{d_{j^*}^l}$, implying that the $d$-nodes add at most $Km(M+Km+m+17)-K^2$ to $\mathbb{U}_2$. For defining $j^*$, we consider the following truth-assignment $\alpha$: for any variable $x_i$ with $i\in[n]$, $\alpha(x_i)=\true$ if $(t_i^1, f_i^2)$ is scheduled before $(f_i^1, t_i^2)$, otherwise $\alpha(x_i)=\false$. Note that if $\alpha(x_i)=\true$ (respectively, $\alpha(x_i)=\false$) we know that $S$ does not activate $(f_i^1, f_i^2)$ (respectively, $(t_i^1, t_i^2)$). Since the formula $\Phi$ is not satisfiable there exists $j^*\in[m]$ such that $c_{j^*}$ is not satisfied by $\alpha$. Let $x_i$ (respectively, $\neg x_i$) be a literal in $c_{j^*}$. Since $c_{j^*}$ is not satisfied by $\alpha$, we that $\alpha(x_i)=\false$ (respectively, $\alpha(x_i)=\true$) and that $(t_i^1,t_i^2)$ (respectively, $(f_i^1,f_i^2)$) is not activated. It is thus impossible to reach $c_{j^*}^2$ from $c_{j^*}^1$ through the variable gadget of $x_i$. On the other hand, in all the other walks in $D$ that connect $c_{j^*}^1$ to $c_{j^*}^2$ the edge $(u_3,u_4)$ appears before the edge $(u_1,u_2)$. Hence, $c_{j^*}^2\not\in\reach{G}{c_{j^*}^1}$ and $e_{j^*}^{l'}\not\in\reach{G}{d_{j^*}^l}$ for $l,l'\in[K]$.
    
    \item For $j\in[m]$, in the best case $\reach{G}{c_j^1}$ contains $c_j^1$, the 12 nodes corresponding to the three variables appearing in clause $c_j$, and the nodes in $\{c_h^2 : h\in[m]\} \cup \{e_h^l : h\in[m], l\in[K]\} \cup \{u_1,u_3,u_4\} \cup B$. This adds $m(M+Km+m+16)$ to $\mathbb{U}_2$.
    
    \item For $i\in[n]$ and for $j\in[2]$, in the best case $\reach{G}{t_i^j}$ and $\reach{G}{f_i^j}$ contain the corresponding four variable nodes and the nodes in $\{c_h^2 : h\in[m]\} \cup \{e_h^l : h\in[m], l\in[K]\} \cup \{u_1,u_3,u_4\} \cup B$. This adds $4n(M+Km+m+7)$ to $\mathbb{U}_2$.
    
    \item For $j\in[m]$, in the best case $\reach{G}{c_j^2}$ contains $c_j^2$ and the nodes in $\{e_j^l : l\in[K]\} \cup \{u_1,u_3,u_4\} \cup B$. This adds $m(M+K+4)$ to $\mathbb{U}_2$.
    
    \item For $j\in[m]$ and $i\in[K]$, in the best case $\reach{G}{e_j^i}$ contains $e_j^i$ and the nodes in $\{u_1,u_3,u_4\} \cup B$. This adds with $Km(M+4)$ to $\mathbb{U}_2$. 
\end{itemize}
In summary,
\begin{eqnarray*}
\mathbb{U}_2 & = & M(M+|H|+4)+(4M+3|H|+15)+(Km(M+Km+m+17)-K^2)\\
             &   & +m(M+Km+m+16)+4n(M+Km+m+7)+m(M+K+4)\\
             &   & +Km(M+4).
\end{eqnarray*}
We have that $\mathbb{L}-\mathbb{U}_2 = -|H|-5 K^2-21Km-4n(K(m-1)+m+7)-20m = K^2-23Km-4n(K(m-1)+m+8)-22m-5 > K^2-Knm(23+4(1+1+8)+22+5)=K^2-90Knm$ using $K,n,m\ge 1$. Let us set $K$ equal to any value greater than or equal to $91nm$. We then have $K^2 > 90Knm$ and, thus, $\mathbb{L}>\mathbb{U}_2$. That is, the $S$-reachability has to be smaller than $\mathbb{L}$.

\medskip
\noindent\textbf{Conclusion.} We have thus proved that the formula $\Phi$ is satisfiable if and only if there exists a schedule $S$ such that the $S$-reachability of $D$ is at least $\mathbb{L}$. This completes the proof of the theorem.\qed
\end{proof}

\section{Conclusion and open problems}

In this paper, we have considered \mret\ problem, that is, the problem of assigning appearing times to the edges of a digraph in order to maximize the total reachability of the resulting temporal graph. We have proved that this problem is \nptime-hard, even when the digraph is strongly connected. We conjecture that the \mret\ problem can be approximated within a constant approximation ratio. In particular, we conjecture that any strongly connected digraph admits an edge temporalisation with temporal reachability at least equal to $c\cdot n^2$ for some constant $c>0$. One way to prove such a statement would be to prove the following interesting graph theory conjecture.

\medskip

\noindent\textbf{Almost Spanning Two Rooted-Arborescences conjecture ({\sc astra}).} \textit{Any strongly connected digraph admits an out-arborescence and an in-arborescence that are edge-disjoint, have the same root, and each spans $\Omega(n)$ nodes.}

\medskip

\begin{figure}[b]
    \centerfloat{\includegraphics[scale=1]{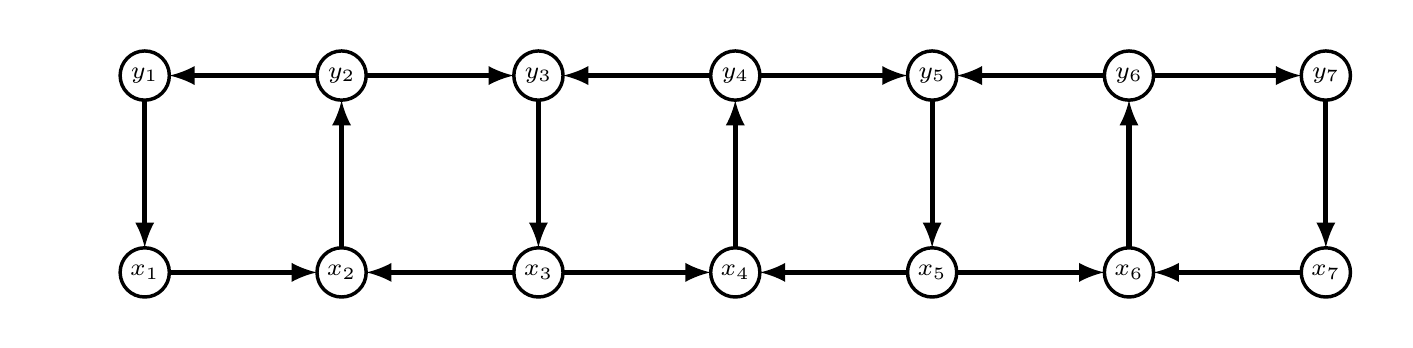}}
    \caption{An example of a digraph for which only some nodes can be roots of two arborescences each spanning $\Omega(n)$ nodes.}
    \label{fig:notallroots}
\end{figure}

Note that it is not difficult to prove that the root of the two arborescences mentioned in the \textsc{astra} conjecture cannot be any node in the graph. For example, let us consider the graph shown in Figure~\ref{fig:notallroots}. In this case, the node $x_{1}$ cannot be the common root of the two arborescences, since the only in-arborescence and the only out-arborescence with root $x_{1}$ share the edge $(x_{2}, y_{2})$, so that one of the two arborescences cannot include more than one node (of course, this example can be generalized to any even number of nodes).

\begin{figure}[t]
    \centerfloat{\includegraphics[scale=0.3]{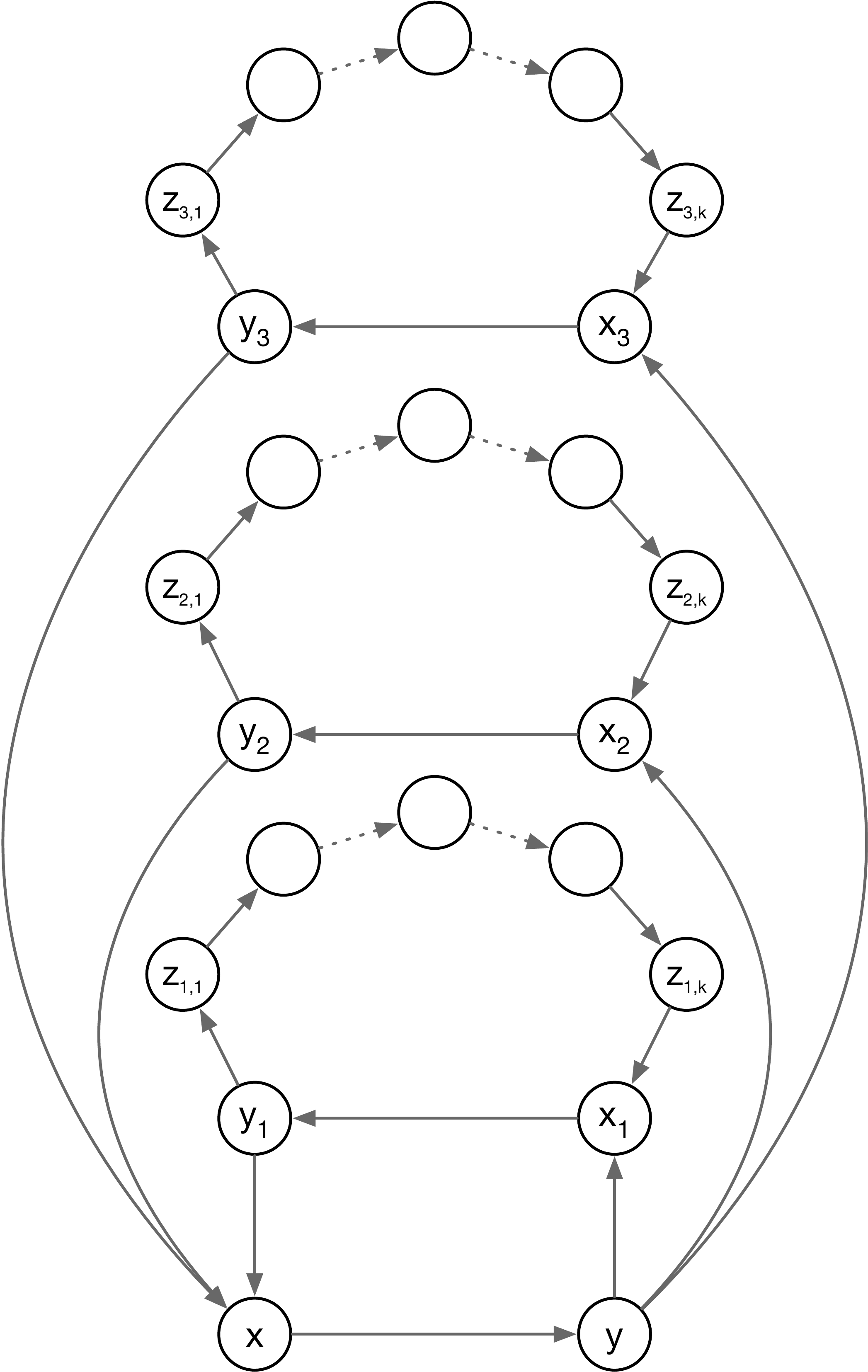}}
    \caption{An example of a digraph for which there are no two edge-disjoint arborescences with a common root and each spanning more than $n/3 + c$ nodes, for some positive constant $c$.}
    \label{fig:nothalfnodes}
\end{figure}

Note also that the \textsc{astra} conjecture is false if we require that the two arborescences span at least $\frac{n}{3-\epsilon}$ nodes, for any positive constant $\epsilon$.
For example, consider the digraph $G = (V,E)$ shown in Figure~\ref{fig:nothalfnodes}, where, for some integer parameter $k > 0$, the set of nodes is $V = \{x,y,x_1,y_1,x_2,y_2,x_3,y_3\} \cup \{z_{i,j} ~|~ 1 \le i \le 3, 1 \le j \le k\}$, and the set of edges is $E = \{(x,y)\} \cup \{(y,x_i),(x_i,y_i), (y_i,x), (y_i,z_{i,1}), (z_{i,k},x_i) ~|~ 1 \le i \le 3\} \cup \{(z_{i,j},z_{i,j+1}) ~|~ 1 \le i \le 3, 1 \le j < k\}$. Observe that the total number of nodes is $n = 3 k + 8$. Let us first give an upper bound on the minimum between the amount of nodes in the in-arborescence and in the out-arborescence in the case where the root is not $x$ nor $y$. For any such node, either the in-arborescence or the out-arborescence can contain at most $k+3$ nodes, since the edge $(x,y)$ can be in one arborescence only. Consider now the case in which either $x$ or $y$ is the root. 
Let us suppose that the root is $x$ (the other case can be analysed in a similar way). Since, for each $i=1,2,3$, the edge $(x_{i},y_{i})$ can be in one arborescence only, then either the in-arborescence or the out-arborescence rooted at $x$ can contain at most $n - 2k = k + 8$ nodes. We thus obtained that, in all cases, either the in-arborescence or the out-arborescence is upper bounded by $k+8 = n/3 + O(1)$.

\bibliographystyle{plain}
\bibliography{biblio}

\end{document}